\newcommand{\ante}{ex-ante} 
\newcommand{\Ante}{Ex-ante} 
\newcommand{\antec}{{\ante} control} 
\newcommand{\Antec}{{\Ante} control} 
\newcommand{\post}{ex-post} 
\newcommand{\Post}{Ex-post} 
\newcommand{\postc}{{\post} control} 
\newcommand{\Postc}{{\Post} control} 
\newcommand{\qcq}{QCQ}
\newcommand{\etal}{\it et al.}
\DeclareMathOperator{\Tr}{Tr}
\DeclareMathOperator{\tr}{Tr} 
\DeclareMathOperator{\trhs}{Tr_{HS}}
\newcommand{\ketbra}[1]{\ket{#1}\!\bra{#1}} 
\newcommand{\kb}{\ketbra}
\renewcommand{\le}{\leqslant}
\renewcommand{\ge}{\geqslant}
\newcommand{\C}{\mathbb{C}}
\renewcommand{\AA}{\mathcal{A}}
\newcommand{\CC}{\mathcal{C}}
\newcommand{\EE}{\mathcal{E}}
\newcommand{\FF}{\mathcal{F}}
\newcommand{\HH}{\mathcal{H}}
\newcommand{\II}{\mathcal{I}}
\newcommand{\LL}{\mathcal{L}}
\newcommand{\NN}{\mathcal{N}}
\newcommand{\OO}{\mathcal{O}}
\newcommand{\id}{\mathrm{id}}
\renewcommand{\a}{\alpha}
\renewcommand{\b}{\beta}
\newcommand{\g}{\gamma}
\renewcommand{\d}{\delta}
\newcommand{\ep}{\varepsilon}
\newcommand{\m}{\mu}
\newcommand{\n}{\nu}
\newcommand{\om}{\omega}
\newcommand{\Om}{\Omega}
\renewcommand{\r}{\rho}
\newcommand{\s}{\sigma}
\newcommand{\z}{\zeta}
\newcommand{\ve}{\bm}
\newcommand{\abs}[1]{{\left\vert #1 \right\vert}}
\newcommand{\paren}[1]{{\left( #1 \right)}}
\newcommand{\brac}[1]{{\left\{ #1 \right\}}}
\newcommand{\bracket}[1]{{\left[ #1 \right]}}
\newcommand{\brak}[1]{{\left[ #1 \right]}}
\newcommand{\f}{\frac}
\newcommand{\q}{\quad}
\newcommand{\da}[1]{#1^\dagger}
\newcommand{\norm}[1]{\left\|#1\right\|}
\newcommand{\dr}{discriminate and reprepare} 
\newcommand{\dn}{do nothing}
\newcommand{\gdn}{no measurement}
\newcommand{\ot}{\otimes}
\newcommand{\Ad}{\AA}
\newcommand{\Ref}{\eqref}
\newcommand{\nn}{\notag \\ }
\newcommand{\Vtr}[3]
{\bracket{\begin{array}{c} #1 \\ #2 \\ #3 \end{array}}}
\newtheorem{theorem}{Theorem}
\newtheorem*{theorem*}{Theorem}
\newtheorem{proposition}{Proposition}
\newtheorem*{claim*}{Claim}
\newtheorem*{corollary*}{Corollary}
\newtheorem{lemma}{Lemma}
\newcommand{\sbt}[1]{_{\text{#1}}}
\begin{document}
\title{
State protection by quantum control before and after noise}
\author{Hiroaki Wakamura}
\email{hwakamura@rk.phys.keio.ac.jp}
\affiliation{Department of Physics, Keio University, Yokohama 223-8522, Japan}
\author{Ry\^uitir\^o Kawakubo}
\email{rkawakub@rk.phys.keio.ac.jp}
\affiliation{Department of Physics, Keio University, Yokohama 223-8522, Japan}
\author{Tatsuhiko Koike}
\email{koike@phys.keio.ac.jp}
\affiliation{Department of Physics, Keio University, Yokohama 223-8522, Japan}
\affiliation{Research and Education Center for Natural Sciences, 
  Keio University, Yokohama 223-8521, Japan}
\date{February, 2017}
\pacs{03.65.Ta, 03.67.-a, 03.67.Pp, 02.30.Yy}

\begin{abstract}
We discuss the possibility of protecting the state of a quantum system
that goes through noise 
by measurements and operations before and after the noise process.
We extend our previous result 
on nonexistence of ``truly quantum'' protocols that 
protect an unknown qubit state against the depolarizing noise 
better than ``classical'' ones 
{}[Phys. Rev. A, 95, 022321 (2017)] 
in two directions. 
First, we show that the statement is also true in 
any finite-dimensional
Hilbert spaces, which was previously conjectured; 
the optimal protocol is 
either the {\dn} protocol or the {\dr} protocol, depending on the
strength of the noise. 
Second, in the case of a qubit, 
we show that essentially the 
same conclusion holds for any unital noise.
These results describe the fundamental limitations in 
quantum mechanics from the viewpoint of control theory. 
\end{abstract}
\maketitle

\section{Introduction} \label{sec:intro}

Quantum information technology, such as quantum computation, quantum cryptography, 
etc., is a new framework of information processing where quantum
states (e.g. qubits)  
bear information in place of classical bits. 
One of the difficulties in realization of those technologies is 
existence of noise. 
Since there is no isolated physical system in the world, 
the state inevitably undergoes 
noise processes 
caused by interactions with the environment.
The state evolution becomes irreversible and 
errors occur in information processing. 
In order to reduce the errors and make information processing feasible, 
protection of quantum states is an important
task~\cite{Sho95,Kni96,Reim05,Lid98,VioLlo98}. 

In the classical world, 
one can in principle protect any state against any noise, 
by taking the complete record of the state before the noise affects
the system.  
In the quantum world, it is not the case even 
if the state is not a probabilistic mixture. 
If one could do so,
then one could suppress
the disturbance caused by measurements 
and realize disturbance-free measurements. 
This would contradict with quantum 
measurement theory~\cite{DavisLewis,Ozawa84},
which implies that quantum measurements 
cannot extract the full information from a single sample 
and inevitably disturb the state. 
Thus impossibility of perfect state protection 
reflects the nature of quantum mechanics,
in the same way as 
impossibility of perfect 
state discrimination~\cite{helstrom, chefles98,unambig123,KawaKoi16} 
or quantum cloning~\cite{nocloning}.

Given this impossibility of perfect state protection, 
one may still want to consider control protocols which 
suppress 
the noise approximately. 
This is similar to pursuing 
the error-disturbance uncertainty 
relation~\cite{Ozawauniv,watanabe} 
or
theory of imperfect cloning~\cite{buzhil,werner98}.
Quantitative analysis of the limits 
in state protection  
may reveal 
the role played by measurements in state protection 
and whether there exists a comprehensive point of view 
to achieve the optimal state protection. 
That may 
also clarify 
the fundamental limitations 
in 
our ability to manipulate quantum systems 
and 
provide an 
operational characterization of the quantum world.

\begin{figure*}[t]
\centering 
\includegraphics[width=.7\linewidth]{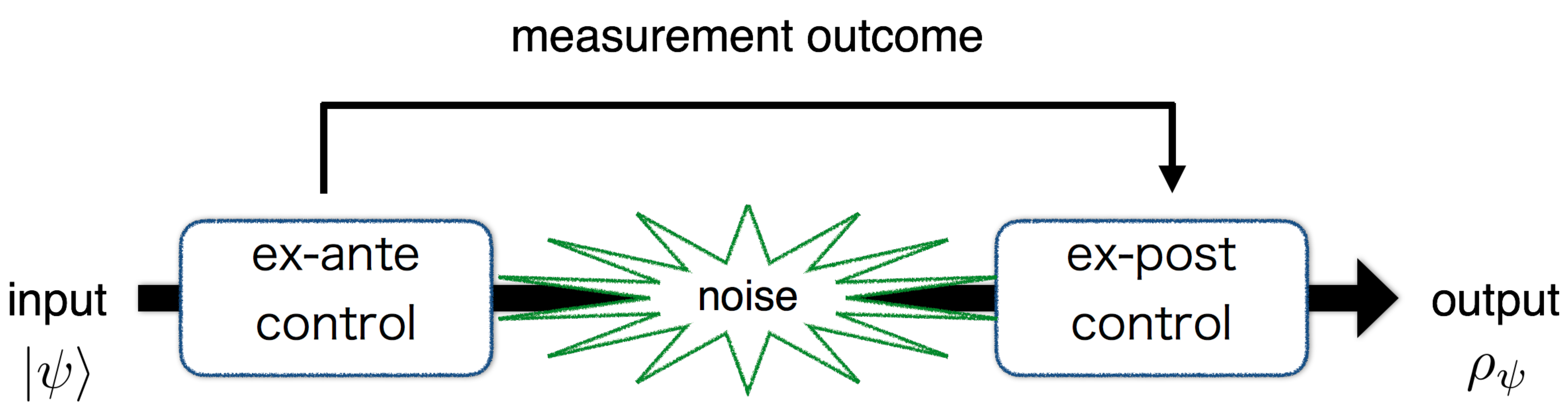}
\caption{The schematic diagram of the quantum control (\antec{} and \postc{}).}
\label{fig-scheme}
\end{figure*}

We would like to refer to recent works 
in the context of {\ante-\postc} scheme~\cite{paper1}.
The scheme consists of 
a general measurement before the noise process (\antec), 
and 
an operations after the noise process 
depending on the outcomes of the measurements (\postc), 
as depicted in Fig.~\ref{fig-scheme}.
The optimal protocol
which protects two states of a qubit 
solely by {\postc} 
has been derived by 
Branczyk {\etal}~\cite{bramen07} 
and 
Mendon\c{c}a {\etal}~\cite{mengil08}. 
An interesting interpretation of their results is that 
the optimal protocol 
detects the influence of the noise 
without discriminating the input states at all. 
Whether this strategy can be 
extended to other situations would be an intriguing
question. 
On the other hand, 
Zhang {\etal}~\cite{zhang08} showed 
that 
the {\postc} alone 
cannot protect 
a completely unknown pure state 
against 
the depolarizing noise. 
The present authors, in the qubit case, 
extended their results 
to general noise~\cite{paper1}: 
the optimal {\postc} protocol to protect a 
completely unknown pure state 
against an arbitrary noise
is a unitary 
operation, 
i.e., it is never beneficial to 
extract information in 
{\postc}. 
It is suggested 
by all these results 
that 
prior knowledge of the input pure states is essential 
to protect them. 
Thus, if one has no information on the input, 
one needs \antec{}. 
\Antec{} was considered 
by Korotokov and Keane~\cite{korotkov04} 
and then by 
Wang {\etal}~\cite{wang14}. 
Although their interests are the protocols with postselection, 
one can find in Ref.~\cite{wang14} 
some numerical results that 
suggest the existence of 
nontrivial \ante{}-\postc{} protocols (without postselection) 
which suppress the amplitude damping noise well. 
On the other hand, 
the present authors~\cite{paper1} proved
that 
there is no nontrivial \ante{}-\postc{} that can
suppress the depolarizing noise
better than ``classical''
protocols i.e. 
the ``{\dn}'' and the ``{\dr}'' protocols, 
where the latter consists of an {\ante} strong measurement and 
an {\post} repreparation of the state corresponding to the outcome of the
measurement.  
Given 
the suggested existence of nontrivial quantum 
protocol for the amplitude
damping noise 
and 
the nonexistence 
of such for the
depolarizing noise, 
it is natural to ask which 
class of noise allows or disallows  
nontrivial quantum state protection.

In this paper, 
we extend our previous results in Ref.~\cite{paper1},
on protection of a qubit against the depolarizing noise,
in two directions. 
We thereby partially solve 
the problem of protecting a completely unknown states against noise by 
\ante-\postc{}. 
The first direction is 
to extend the results to 
higher dimensional Hilbert spaces. 
To achieve that, 
we provide two observations that are powerful and quite general. 
One is convexity of the space of noise that are optimally
suppressed by the same protocol; 
the other is a sufficient condition on the noise 
for the {\dr} protocol to be optimal. 
As an application of these observations, 
we prove the conjecture in Ref.~\cite{paper1} that 
either the {\dn} or the {\dr} protocol is optimal to suppress the
depolarizing noise in general. 
The second direction is to widen the types of noise in the case of a
qubit. 
The class of noise considered is ``unbiased'' 
or 
{\em unital} noise, which leaves
the completely mixed state unchanged. 
The class 
contains  many types of noise appearing in quantum information 
including the depolarizing noise, but does not contain the amplitude damping noise. 
It can be said that unital noise makes any state more random
because it never decrease the (von Neumann) entropy of a quantum state.
We show that 
the optimal \ante{}-\postc{} protocol
to suppress unital noise 
is either 
a no measurement protocol
or the {\dr} protocol.

The paper is organized as follows.
After a very short review of 
basic mathematical tools 
in Sec.~\ref{sec:pre}, 
the state protection scheme by \ante-\postc{} is introduced 
in Sec.~\ref{sec:qucon}. 
We give general observations on noise suppression 
and show that the ``classical'' protocols suffice in state protection 
against the depolarizing noise 
in Sec.~\ref{sec-qcq}. 
Then we focus on the qubit case; 
we review the geometry of the space of unital noise 
in Sec.~\ref{sec:tetrahedron} 
and 
show that the ``classical'' protocols are optimal 
in Sec.~\ref{sec:main}. 
Sec.~\ref{sec:sum} is devoted to conclusion and discussions.

\section{Basics of quantum operations}
\label{sec:pre}

We shall introduce the basic mathematical tools and notation used in 
our analysis. 
Throughout the paper, we consider 
physical
systems which are
represented by 
a finite-dimensional 
Hilbert space. 
Let $\mathcal{H}$ be  such a Hilbert space and $\mathcal{L(H)}$ be the set of all linear
operators on $\mathcal{H}$.  
A quantum state is described by a density operator $\rho\in\LL(\HH)$ 
such that $\rho\ge0$ and $\tr \rho=1$. 
Since the control of quantum states consists of measurement and operations,
the mathematical map corresponding to measurement or operations is explained below.

Any physical evolution of a quantum state corresponds to 
a trace-preserving completely positive (TPCP) map, 
and vice versa (e.g.~\cite{Nielsen}).  
Here, a linear map $\mathcal{E}:\mathcal{L(H)}\rightarrow\mathcal{L(H)}$ is
said  {\it positive}\/ if $X\ge0$ implies
$\mathcal{E}(X)\ge0$ and
{\it completely positive (CP)}\/ if the map 
$\mathcal{E}\otimes {\id}_n$ 
is positive for every positive integer $n$, 
where $\id_n$ denotes the identity map on $\mathcal{L}(\C^n)=\C^{n\times n}$. 
The map $\mathcal{E}$ is
said 
{\it trace-preserving (TP)}\/ if $\tr\mathcal{E}(X)=\tr X$  for
any $X\in\mathcal{L(H)}$. 
 
Any physical measuring process corresponds to a
{CP instrument}, and vice versa~\cite{Ozawa84}. 
Here, a {\em CP instrument}\/ is 
a family 
$\{\II_\omega\}_{\omega\in\Omega}$ of CP maps 
with $\sum_{\omega\in\Omega}\II_\omega$ being trace-preserving. 
We assume that the set $\Omega$ of outcomes is finite throughout the
paper. 
The state evolution by the measurement is described as 
\begin{align}
 \rho \mapsto \frac{\II_\omega(\rho)}{\tr\II_\omega(\rho)},\quad
\text{with probability} \quad
\tr\II_\omega(\rho).
\end{align}
The probability distribution of measurement outcomes is 
described by a positive operator valued measure (\textit{POVM}), which is 
a family 
$\brac{M_\om}{}_{\om\in\Omega}$ 
of positive operators on $\HH$
such that 
$\sum_{\om\in\Omega} M_\om$ is the identity operator. 
A CP instrument $\brac{\II_\om}_{\om\in\Omega}$ defines a POVM 
$\brac{M_\om}{}_{\om\in\Omega}$ by 
$\Tr\rho M_\om=\Tr\II_\om(\rho)$ or
$M_\om=\II_\om^*(1)$, 
where an asterisk denotes the dual map. 
The dual map $\EE^\ast$ of $\EE\in\LL(\HH)$ is defined by 
$\tr\EE(X)Y=\tr X\EE^\ast(Y)$. 
We shall say that 
a POVM $\brac{M_\om}{}_{\om\in\Omega}$ 
and 
a CP instrument 
$\brac{\II_\om}{}_{\om\in\Omega}$ 
above 
are associated with each other. 
A POVM has all the information on the statistical properties of the 
measurement outcomes, while a CP instrument $\brac{\II_\om}$ 
has further information on the resulting states after the measurement. 

The space $\LL(\HH)$ of linear operators can be 
regarded 
as a Hilbert space with 
the Hilbert-Schmidt inner product
$\braket{X,Y}_{\mathrm{HS}}:=\Tr X^\dagger Y$. 
A linear map $\EE$ on $\LL(\HH)$ is interpreted as a linear operator 
on the Hilbert space $\mathcal{L(H)}$. 
The trace of such $\mathcal{E}$ 
is defined by 
\begin{align}
 \trhs\mathcal{E}:=\sum_{i} \braket{V_i,\mathcal{E}(V_i)}_{\mathrm{HS}},
\end{align}
where 
$\{V_i\}_i$ is an orthonormal basis of 
the Hilbert space $\mathcal{L(H)}$.
For example, when $\dim\HH=2$, 
an orthonormal basis of
$\LL(\HH)$ 
is given by 
$\{\sigma_\mu/\sqrt2\}_{\mu=0}^3$ 
where 
$\sigma_0$ is the identity operator 
and 
$\s_i$, $1\le i\le 3$, are the Pauli operators. 
Then the trace 
is written as 
\begin{align}
 \trhs\mathcal{E}
 =
 \f12
 \sum_{\mu=0}^3
 \tr\brak{\sigma_\mu\mathcal{E}(\sigma_\mu)}. 
 \label{eq-trhs-2d}
\end{align}

\section{The setup} 
\label{sec:qucon}

We shall discuss the noise suppression 
in the 
{\em \ante{}-\post{} quantum control scheme} below, 
which was proposed in Ref.~\cite{paper1}
(see Fig.~\ref{fig-scheme}). 
The scheme consists of the following: 
\begin{enumerate}
 \item State preparation: 
An unknown state 
is prepared. 
 \item \Antec: 
A measurement is performed, 
which is described by a CP instrument 
$\{\II_\om\}_{\om\in\Om}$. 
 \item Noise: 
The state undergoes an undesired evolution, called  ``noise,''
described by a TPCP map $\mathcal{N}$. 
 \item \Postc{}:
An operation, 
which depends on the
measurement outcome $\om$ of the \ante{} control, 
is performed on the system. 
This is described by a family 
$\{\CC_\om\}_{\om\in\Om}$ 
of TPCP maps.
\end{enumerate}
For given noise $\NN$, an 
\ante{}-\post{} control protocol is specified by 
the family $\{(\II_\om,\CC_\om)\}_{\om\in\Om}$. 
As we did in Ref.~\cite{paper1}, 
we focus on the case 
that the states prepared in Step 1 are pure 
and 
is 
{\em completely unknown}, i.e., 
the prior 
probability distribution is uniform on the unit sphere in
$\mathcal{H}$, 
though one can consider more general cases within the scheme above. 
Our problem is 
to find an {\em optimal}\/ \ante-\postc{} protocol 
$\brac{(\II_\om,C_\om)}_{\om\in\Omega}$
for given noise $\NN$
such that 
the states after the measurement with outcome $\om$ 
are as close to the original state $\ketbra\psi$ as
possible.  
We evaluate the closeness by 
{\em fidelity},
which is expressed by 
$F(\r,\kb{\psi}):=\braket{\psi|\r|\psi}$ 
if one of the two states is 
pure (e.g.~\cite{Nielsen}),
and the optimality is defined by 
the average 
fidelity 
with respect to the probability to obtain each outcome $\om$ 
and with respect to that of each input state $\ket\psi$.

An advantage of the choice is 
that 
the resulting averaged evaluation function, the {\it average fidelity} 
\begin{align}
 \bar{F}=\int_{\|\ket{\psi}\|=1} d\psi
 \braket{\psi|\EE(\ketbra\psi)|\psi},
 \label{eq:avefid1} 
\end{align}
depends on 
the {\em average operation}\/ 
\begin{align}
\EE:=
 \sum_{\om=1}^M\CC_\om\circ\mathcal{N}\circ\II_\om 
 \label{eq-averaged-op}
\end{align}
which is a TPCP map. 
We will use the formula~\cite{paper1} 
\begin{align}
 \bar{F}=
 \frac{d+\trhs\EE}{d(d+1)}, 
\label{eq:avefid2} 
\end{align}
where 
$d:=\dim \mathcal{H}$ 
and 
$\EE$ is 
the average operation 
\eqref{eq-averaged-op} 
of the protocol $\brac{(\II_\om,\CC_\om)}_\om$.

\section{results in general system} 
\label{sec-qcq}
In this section, 
we consider noise in the system of a 
$d$-dimensional Hilbert space $\HH$. 
We first give the important property 
which comes from 
convexity of the space of noise (TPCP maps) 
(Proposition~\ref{lem-convexity}). 
Second, we give a sufficient condition on the noise 
for the {\dr} protocol to be optimal 
(Proposition~\ref{prop-dr}); 
this solves 
the difficulties 
in seeking optimal \ante-\postc{} protocol 
for a wide range of noise. 
Third, we combine these facts to 
solve the optimality problem for the depolarizing noise, 
the case $d\ge3$ of which 
was a conjecture in our previous work~\cite{paper1}. 
This serves as a demonstration of the strength of 
Propositions~\ref{lem-convexity} and 
\ref{prop-dr}. 
\begin{proposition}
  The space of all noise processes $\NN$ 
  that are optimally suppressed by a 
  single {\ante-\postc} protocol 
  $\brac{(\II_{\om},\CC_{\om})}_{\om\in\Om}$ 
  is convex. 
  \label{lem-convexity}
\end{proposition}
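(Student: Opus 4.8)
The plan is to combine two elementary facts: that, for a \emph{fixed} protocol, the average fidelity is an affine function of the noise, and that the optimal average fidelity, being a pointwise supremum of such affine functions, is convex in the noise. Granting these, the assertion follows from a short sandwich estimate.

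First I would make the affine dependence explicit. Fix a protocol $P=\brac{(\II_\om,\CC_\om)}_{\om\in\Om}$. Its average operation $\EE=\sum_{\om}\CC_\om\circ\NN\circ\II_\om$ depends linearly on $\NN$, since composition is linear in each factor and a finite sum of terms each linear in $\NN$ is again linear in $\NN$. As $\trhs$ is a linear functional on $\LL(\LL(\HH))$, the number $\trhs\EE$ is linear in $\NN$, so by Eq.~\eqref{eq:avefid2} the average fidelity $\bar F_P(\NN):=(d+\trhs\EE)/(d(d+1))$ is affine in $\NN$. In particular, if $\NN_1,\NN_2$ are TPCP maps and $\lambda\in[0,1]$, then $\NN:=\lambda\NN_1+(1-\lambda)\NN_2$ is again a TPCP map (TPCP maps form a convex set) and $\bar F_P(\NN)=\lambda\bar F_P(\NN_1)+(1-\lambda)\bar F_P(\NN_2)$.

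Next, write $\bar F\sptext{opt}(\NN):=\sup_{P}\bar F_P(\NN)$ for the optimal average fidelity, the supremum running over all \ante-\postc{} protocols $P$, a set that does not depend on $\NN$. Being a pointwise supremum of affine functions, $\bar F\sptext{opt}$ is convex in $\NN$. Now assume that both $\NN_1$ and $\NN_2$ are optimally suppressed by $P$, i.e.\ $\bar F_P(\NN_i)=\bar F\sptext{opt}(\NN_i)$ for $i=1,2$, and put $\NN=\lambda\NN_1+(1-\lambda)\NN_2$. Then
\begin{align}
\bar F_P(\NN)
&=\lambda\bar F_P(\NN_1)+(1-\lambda)\bar F_P(\NN_2) \nn
&=\lambda\bar F\sptext{opt}(\NN_1)+(1-\lambda)\bar F\sptext{opt}(\NN_2) \nn
&\ge\bar F\sptext{opt}(\NN)\ge\bar F_P(\NN),
\end{align}
where the first inequality is convexity of $\bar F\sptext{opt}$ and the second is the definition of $\bar F\sptext{opt}$ as a supremum over protocols, which includes $P$. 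Hence equality holds throughout, so $\bar F_P(\NN)=\bar F\sptext{opt}(\NN)$; that is, $P$ optimally suppresses $\NN$ as well, and the set of such noise is convex.

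I do not expect a real obstacle: the argument is bookkeeping. The two points needing a little care are (i) verifying that $\NN\mapsto\bar F_P(\NN)$ is genuinely \emph{affine} rather than merely convex, which is exactly what lets the sandwich close, and (ii) noting that a convex mixture of admissible noise is admissible, so $\bar F\sptext{opt}$ is defined at the mixed point and the supremum there contains the contribution of $P$. If one wished to phrase things with a maximising protocol rather than a supremum, one would further observe that, for a bounded number of outcomes, the protocol space is compact in the Choi representation and $\bar F_P(\NN)$ is continuous in $P$, so the supremum is attained; but this refinement is not needed for convexity.
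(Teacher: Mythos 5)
Your proof is correct and is essentially the paper's own argument: both rest on the observation that $\trhs\EE$ (hence $\bar F_P$) is affine in $\NN$ for a fixed protocol $P$, combined with optimality of $P$ at the two endpoints; your ``pointwise supremum of affine functions is convex'' step is just a repackaging of the paper's direct comparison with an arbitrary competing protocol. No gap.
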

\begin{proof}
  The claim is equivalent to the following: 
  if a control protocol $\brac{(\II_{\om},\CC_{\om})}$
  optimally suppresses two noise processes $\NN_1,\NN_2$ then 
  it also optimally suppresses any mixture of them,
  $\NN=(1-\a)\NN_1+\a\NN_2$, $0\le\a\le1$. 
  Let $\brac{(\II_{\om},\CC_{\om})}$ be 
  such a protocol. 
  By 
  \Ref{eq-averaged-op} and 
  \Ref{eq:avefid2}, 
  the optimal protocol 
  maximizes $\trhs \EE$. 
  For any protocol $\brac{(\II'_\om,\CC'_\om)}$, 
  one has
  \begin{align}
    &
    \trhs \II'_\om\circ \NN\circ \CC'_\om
    \nn
    &
    =
    (1-\a)
    \trhs \II'_\om\circ \NN_1\circ \CC'_\om
    +\a
    \trhs \II'_\om\circ \NN_2\circ \CC'_\om
    \nn
    &\le
    (1-\a)
    \trhs \II_\om\circ \NN_1\circ \CC_\om
    +
    \a
    \trhs \II_\om\circ \NN_2\circ \CC_\om
    \nn
    &=
    \trhs \II_\om\circ \NN\circ \CC_\om. 
  \end{align}
  Thus 
  $\brac{(\II_\om,\CC_\om)}$ maximizes 
  $
  \trhs \II'_\om\circ \NN\circ \CC'_\om
  $ hence 
  it optimally suppresses $\NN$. 
\end{proof}

Let us consider operations of the form 
\begin{align}
 \FF(\r)=\sum_k \r_k\tr M_k\r, 
\label{eq-ebtp}
\end{align}
in which 
one measures
the input state $\r$
by a POVM $\brac{M_k}$ and prepares a state
$\r_k$ according to the measurement outcome $k$. 
Such an operation is sometimes called an 
quantum-classical-quantum (\qcq)
channel~\cite{holevo}. 

\begin{proposition}
  \label{prop-dr}
  In the scheme of {\ante-\postc}, 
  any {\qcq} noise is optimally suppressed by 
  the {\dr} protocol 
  $\brac{(\II_\om,\CC_\om)}_{\om=1,...,d}$ 
  defined by 
   \begin{align}
     \II_\om(\rho)
     &=
     \ketbra{\phi_\om}\rho\ketbra{\phi_\om},
     \label{eq-DR-I-gendim}
     \\
     \mathcal{C}_\om(\rho)
     &=
     \ket{\phi_\om}\bra{\phi_\om}{\tr\r},
     \label{eq-DR-C-gendim}
   \end{align}
   where $\brac{\ket{\phi_\om}}_{\om=1,...,d}$ 
   is an arbitrary orthonormal basis of $\mathcal{H}$. 
   The optimal average fidelity is  
   \begin{align}
     \bar F_{\mathrm{DR}}=\f2{d+1}. 
     \label{eq-Fbar-DR-gendim}
   \end{align}
\end{proposition}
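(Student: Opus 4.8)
The plan is to split the statement into a converse (no protocol beats $2/(d+1)$ against a {\qcq} noise) and an achievability part (the stated {\dr} protocol attains it), and to reduce both to one elementary identity. Recall that, by \eqref{eq-averaged-op}--\eqref{eq:avefid2}, a protocol is optimal iff its average operation $\EE$ maximizes $\trhs\EE$. The key identity I would establish is: for any {\qcq} channel $\FF(\r)=\sum_k\r_k\,\Tr(M_k\r)$ of the form \eqref{eq-ebtp},
$\trhs\FF=\sum_k\Tr(M_k\r_k)$. This follows by expanding $\trhs\FF=\sum_i\braket{V_i,\FF(V_i)}_{\mathrm{HS}}$ in an orthonormal operator basis $\{V_i\}$, writing $\Tr(M_kV_i)=\braket{M_k,V_i}_{\mathrm{HS}}$ (using that $M_k$ is Hermitian), and collapsing the $i$-sum with the completeness relation $\sum_i\braket{A,V_i}_{\mathrm{HS}}\braket{V_i,B}_{\mathrm{HS}}=\braket{A,B}_{\mathrm{HS}}$. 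The converse bound is then immediate: each $\r_k$ being a density operator gives $\r_k\le1$, hence $\Tr(M_k\r_k)\le\Tr M_k$, so $\trhs\FF=\sum_k\Tr(M_k\r_k)\le\sum_k\Tr M_k=\Tr1=\dim\HH=d$ for \emph{every} {\qcq} channel $\FF$.

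Second, I would show that against a {\qcq} noise the average operation of \emph{any} protocol is again {\qcq}. Writing the noise as $\NN(\r)=\sum_j\n_j\,\Tr(N_j\r)$ with $\{N_j\}$ a POVM and the $\n_j$ states, substituting into \eqref{eq-averaged-op}, and carrying each $N_j$ back through the instrument via the dual map, $\Tr\!\bigl(N_j\,\II_\om(\r)\bigr)=\Tr\!\bigl(\II_\om^*(N_j)\,\r\bigr)$, one gets $\EE(\r)=\sum_{\om,j}\CC_\om(\n_j)\,\Tr\!\bigl(\II_\om^*(N_j)\,\r\bigr)$, which is of the form \eqref{eq-ebtp} with repreparation states $\r_{\om j}=\CC_\om(\n_j)$ (density operators, since $\CC_\om$ is TPCP) and effects $M_{\om j}=\II_\om^*(N_j)\ge0$; this family is a genuine POVM because $\sum_{\om,j}\II_\om^*(N_j)=\bigl(\sum_\om\II_\om\bigr)^*(1)=1$ by trace preservation of $\sum_\om\II_\om$. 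Combined with the identity above, $\trhs\EE\le d$, so \eqref{eq:avefid2} yields $\bar F\le(d+d)/[d(d+1)]=2/(d+1)$ for every protocol.

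Third, I would verify that the {\dr} protocol of \eqref{eq-DR-I-gendim} and \eqref{eq-DR-C-gendim} is admissible and attains the bound. The family $\{\II_\om\}$ is a CP instrument (each $\II_\om$ has the single Kraus operator $\ketbra{\phi_\om}$ and $\sum_\om\II_\om$ is trace preserving since $\sum_\om\ketbra{\phi_\om}=1$), and each $\CC_\om$ is TPCP. Plugging into \eqref{eq-averaged-op} and using only that $\NN$ is trace preserving, so $\Tr\NN(\ketbra{\phi_\om})=1$, the noise drops out entirely and the average operation is the complete dephasing in the basis $\{\ket{\phi_\om}\}$, $\EE_{\mathrm{DR}}(\r)=\sum_{\om=1}^d\braket{\phi_\om|\r|\phi_\om}\,\ketbra{\phi_\om}$, which is {\qcq} with $M_\om=\r_\om=\ketbra{\phi_\om}$. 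The identity then gives $\trhs\EE_{\mathrm{DR}}=\sum_{\om=1}^d\Tr(M_\om\r_\om)=d$, hence $\bar F_{\mathrm{DR}}=2/(d+1)$ by \eqref{eq:avefid2}, which is \eqref{eq-Fbar-DR-gendim}. Together with the converse this proves optimality.

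The main obstacle is the reduction in the second step: one must carry the dual maps through correctly so that the effects emerge as $\II_\om^*(N_j)$, and then check that they sum to the identity — which is precisely where ``control'' rather than post-selection enters, via trace preservation of $\sum_\om\II_\om$; the remaining manipulations are bookkeeping. A more conceptual alternative for the converse would be to note that a {\qcq} channel is {\eb}, that $\EE$ is therefore {\eb}, and to use the relation $\bar F=(1+dF_{\mathrm e})/(d+1)$ together with the fact that the entanglement fidelity $F_{\mathrm e}$ of an {\eb} channel is at most $1/d$ (a separable Choi state overlaps any maximally entangled state by at most $1/d$); but the direct computation above is self-contained and avoids introducing Choi operators.
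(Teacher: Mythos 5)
Your proposal is correct, and its skeleton matches the paper's: (a) show that composing any protocol with a {\qcq} noise yields a {\qcq} average operation, (b) bound the average fidelity of {\qcq} channels by $2/(d+1)$, and (c) check that the {\dr} protocol attains this. Step (a) is essentially identical to the paper's (same reduction via the dual map, $M_{\om j}=\II_\om^*(N_j)$, same POVM check from trace preservation of $\sum_\om\II_\om$). The difference is in step (b): the paper simply cites Bru{\ss} and Macchiavello for the maximal average fidelity of {\qcq} channels, whereas you derive it from scratch via the identity $\trhs\FF=\sum_k\Tr(M_k\r_k)$ and the elementary estimate $\Tr(M_k\r_k)\le\Tr M_k$, then convert through Eq.~\eqref{eq:avefid2}. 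Your identity is correct (the completeness relation in the Hilbert--Schmidt basis collapses the sum, and $\Tr\bigl(M_k(1-\r_k)\bigr)\ge0$ since both factors are positive semidefinite), and it buys a self-contained proof that also streamlines step (c): the same identity immediately gives $\trhs\EE_{\mathrm{DR}}=d$ for the dephasing channel, which is exactly the paper's computation \eqref{eq-DR-trhs} generalized beyond $d=2$. The trade-off is only length versus reliance on an external result; both arguments are sound.
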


We give a remark before proving the proposition. 
The {\dr} protocol above 
is to discriminate the input state between certain 
$d$
orthogonal states $\brac{\ket{\phi_\om}}_{\om=1,...,d}$ 
and reprepare the discriminated state $\ket{\phi_\om}$ after the noise. 
The value 
of the average fidelity is 
\Ref{eq-Fbar-DR-gendim},  
which is independent from $\NN$ and 
from the choice of $\brac{\ket{\phi_\om}}$. 
Indeed, 
it follows from \eqref{eq-DR-C-gendim} 
that 
$
\CC_\om\circ\NN\circ\II_\om=
\CC_\om\circ\II_\om
$ holds 
for any trace-preserving $\NN$. 
From Eq.~\Ref{eq-trhs-2d}, one has 
\begin{align}
&\trhs 
\CC_\om\circ\II_\om
=\f12\Tr\II_\om\circ\CC_\om(1)
  =1. 
  \label{eq-DR-trhs}
\end{align}
Then the average fidelity \Ref{eq-Fbar-DR-gendim} is 
obtained by the general formula~\eqref{eq:avefid2} for $\bar F$.

\begin{proof} 
[Proof of Proposition~\ref{prop-dr}]
First, 
we observe 
that 
if $\NN$ is a {\qcq} channel, 
so is the average operation $\EE=\sum_\om\CC_\om\circ\NN\circ\II_\om$ 
for any protocol $\brac{(C_\om,\II_\om)}$. 
This is so because if $\NN$ is written in the form 
\Ref{eq-ebtp}, then one has 
\begin{align}
  \EE(\r)=\sum_{\om,k}\CC_\om(\r_k)
  \Tr\brak{ \II_\om^*(M_k)\r }, 
\end{align}
with each $\CC_\om(\r_k)$ being a state and 
$\brac{ \II_\om^*(M_k)}$ being a POVM. 
Second, it is shown in Ref.~\cite{BruMac99} that 
the maximum 
average fidelity 
between the input and output states 
for {\qcq} channels $\FF$ is given by 
$\bar{F}=2/(d+1)$. 
Therefore,  
the average fidelity $\bar F$ 
for 
any protocol 
$\brac{(C_\om,\II_\om)}$ 
does not exceed 
that value. 
On the other hand, 
the value can 
be attained by the {\dr} protocol in the theorem. 
\end{proof}

The proposition above partially solve the problem of 
state protection by {\ante-\postc}; 
if the noise turns out to be {\qcq}, then the {\dr} protocol is 
optimal. 
Several equivalent conditions for a map to be {\qcq} is
known~\cite{HorShoRus03}. 
By using one of such conditions, 
we can prove 
the conjecture proposed in Ref.~\cite{paper1} as a corollary 
of Proposition~\ref{prop-dr}. 
\begin{theorem}
The optimal {\ante}-{\post} 
protocol 
 $\{(\II_\om,\CC_\om)\}_{\om=1,...,d}$
for the depolarizing noise 
\begin{align}
  \NN
 =
 (1-\varepsilon)\rho+\varepsilon\frac{1}{d}\tr
 \rho, 
 \label{depo-noise}
\end{align}
 is given as follows.

(i) When the noise is weak, $\ep\le d/(d+1)$, 
   the {\dn} protocol 
   $\{(\II_\om,\CC_\om)\}_{\om=1}$ with 
\begin{align}
  (\II_1,\CC_1)=(\id, \id) 
\end{align}
is optimal. 
The optimal average fidelity is 
$\bar F_{\mathrm{DN}}
=1-\ep(d-1)/d
$. 

(ii) When the noise is strong, $\ep\ge d/(d+1)$, 
the {\dr}  protocol 
$\{(\II_\om,\CC_\om)\}_{1\le \om\le d}$ 
given by 
\Ref{eq-DR-I-gendim} and 
\Ref{eq-DR-C-gendim}
is optimal. 
The optimal average fidelity is 
$\bar F_{\mathrm{DR}}=2/(d+1)$. 
\label{conj:kekka3}
\end{theorem}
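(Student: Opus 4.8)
The plan is to use the threshold value $\ep^\ast=d/(d+1)$ as a pivot and to derive both parts from Propositions~\ref{lem-convexity} and \ref{prop-dr}, rather than attempting a direct optimization over all protocols.

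For the strong-noise regime~(ii), the key observation is that the depolarizing channel \eqref{depo-noise} is a {\qcq} (equivalently, {\eb}) channel precisely when $\ep\ge\ep^\ast$. Its normalized {\choio} is the isotropic state $(1-\ep)\,\ketbra{\Phi}+\ep\,(I\ot I)/d^2$ with $\ket\Phi$ maximally entangled, which is separable iff $1-\ep\le 1/(d+1)$, i.e.\ iff $\ep\ge\ep^\ast$; since the separable cone is closed this includes the boundary $\ep=\ep^\ast$. To keep the argument self-contained one can instead exhibit the threshold channel explicitly in the form \eqref{eq-ebtp}: $\NN_{\ep^\ast}(\rho)=\tfrac{1}{d+1}(\rho+I\tr\rho)$ is exactly the ``measure the input along a random pure state and reprepare that state'' channel for the uniform POVM (which one may discretize to at most $d^2$ outcomes). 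Either way, once $\NN_\ep$ is known to be {\qcq}, Proposition~\ref{prop-dr} applies verbatim: the {\dr} protocol \eqref{eq-DR-I-gendim}--\eqref{eq-DR-C-gendim} is optimal with $\bar F_{\mathrm{DR}}=2/(d+1)$, which is~(ii).

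For the weak-noise regime~(i), I would first record the {\dn} value: with average operation $\EE=\NN_\ep$ one computes $\trhs\NN_\ep=d^2-\ep(d^2-1)$ (using the basis with one element $\propto I$ and the rest traceless), so \eqref{eq:avefid2} gives $\bar F_{\mathrm{DN}}=1-\ep(d-1)/d$, and at $\ep=\ep^\ast$ this equals $2/(d+1)$. By part~(ii) the value $2/(d+1)$ is \emph{the} optimal fidelity at $\ep=\ep^\ast$, so the {\dn} protocol is also optimal there; it is trivially optimal at $\ep=0$, where $\NN_0=\id$ and it attains fidelity $1$. Now every $\NN_\ep$ with $0\le\ep\le\ep^\ast$ is the convex combination $(1-t)\,\id+t\,\NN_{\ep^\ast}$ of TPCP maps, with $t=\ep(d+1)/d\in[0,1]$, so Proposition~\ref{lem-convexity}, applied to the single {\dn} protocol, forces that protocol to be optimal for all such $\ep$ with fidelity $1-\ep(d-1)/d$. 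Together with~(ii) this also shows the two branches meet continuously at $\ep^\ast$, so every value of $\ep$ is covered.

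The step I expect to be the real obstacle is the {\qcq}/{\eb} borderline claim in~(ii): one must know that $\NN_{\ep^\ast}$ is \emph{genuinely} {\qcq} and not merely a limit of {\qcq} channels, which is exactly where the explicit measure-and-reprepare realization, or the closedness of the separable cone in the {\choio} characterization (one of the equivalent conditions of Ref.~\cite{HorShoRus03}), does the work; everything else reduces to bookkeeping with \eqref{eq:avefid2} and the two propositions. A secondary point requiring care is that Proposition~\ref{lem-convexity} is a statement about a \emph{fixed} protocol, so in~(i) one must verify optimality of the single {\dn} protocol at \emph{both} endpoints $\ep=0$ and $\ep=\ep^\ast$ before invoking convexity.
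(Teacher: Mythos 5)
Your proposal is correct and follows essentially the same route as the paper: part~(ii) via the separability of the isotropic Choi state and Proposition~\ref{prop-dr}, and part~(i) by checking optimality of the {\dn} protocol at the endpoints $\ep=0$ and $\ep=d/(d+1)$ and invoking the convexity of Proposition~\ref{lem-convexity}. The explicit measure-and-reprepare realization of the threshold channel is a nice self-contained alternative to citing the separability threshold, but it does not change the argument.
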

\begin{proof}
  We first show that the 
  noise $\NN$ is {\qcq} if and only if $\ep\ge d/(d+1)$. 
  It is known~\cite{HorShoRus03}  
  that a linear map $\NN$ is {\qcq} if and only if 
  the image of the maximally entangled state 
  $\ket\Psi:=(1/\sqrt d)\sum_{i}\ket{ii}$ by $\id\ot\EE$ is 
  separable. 
  From \Ref{depo-noise}, one has
  \begin{align}
    (\id\ot\NN)\paren{\kb\Psi}=(1-\ep)\kb\Psi+\ep\f1d. 
  \end{align}
  The right hand side above is 
  a mixture of the maximally entangled state and 
  the completely mixed state. 
  The condition that such a state 
  is separable, 
  hence $\EE$ is {\qcq}, 
  is
  $\ep\ge d/(d+1)$~\cite[Sec.~IVB]{VidTar99}. It then follows from
  Proposition~\ref{prop-dr} that the {\dr} protocol is optimal when 
  $\ep\ge d/(d+1)$. 
  On the other hand, 
  the {\dn} protocol gives the average fidelity 
  \begin{align}
    \bar F\sbt{DN}=1-\ep+\f\ep d, 
    \label{eq-Fbar-DN-gendim}
  \end{align}
  which follows from 
  \Ref{eq:avefid2}
  and 
  \Ref{depo-noise}. 
  When $\ep=d/(d+1)$, 
  one has $\bar F\sbt{DN}=\bar F\sbt{DR}=2/(d+1)$ 
  so that 
  the noise $\NN$ is optimally
  suppressed 
  both 
  by the {\dr} and {\dn} protocols. 
  Furthermore, when $\ep=0$, the noise $\NN=\id$ 
  is optimally 
  suppressed by the {\dn} protocol. 
  Therefore, by Proposition~\ref{lem-convexity}, any noise $\NN$ with 
  $0\le \ep\le d/(d+1)$, a convex combination of the two cases above, 
  is optimally suppressed by the {\dn} protocol, when the optimal
  average fidelity is given by \Ref{eq-Fbar-DN-gendim}. 
\end{proof}

\begin{figure*}[t]
\centering 
\includegraphics[width=.8\linewidth]{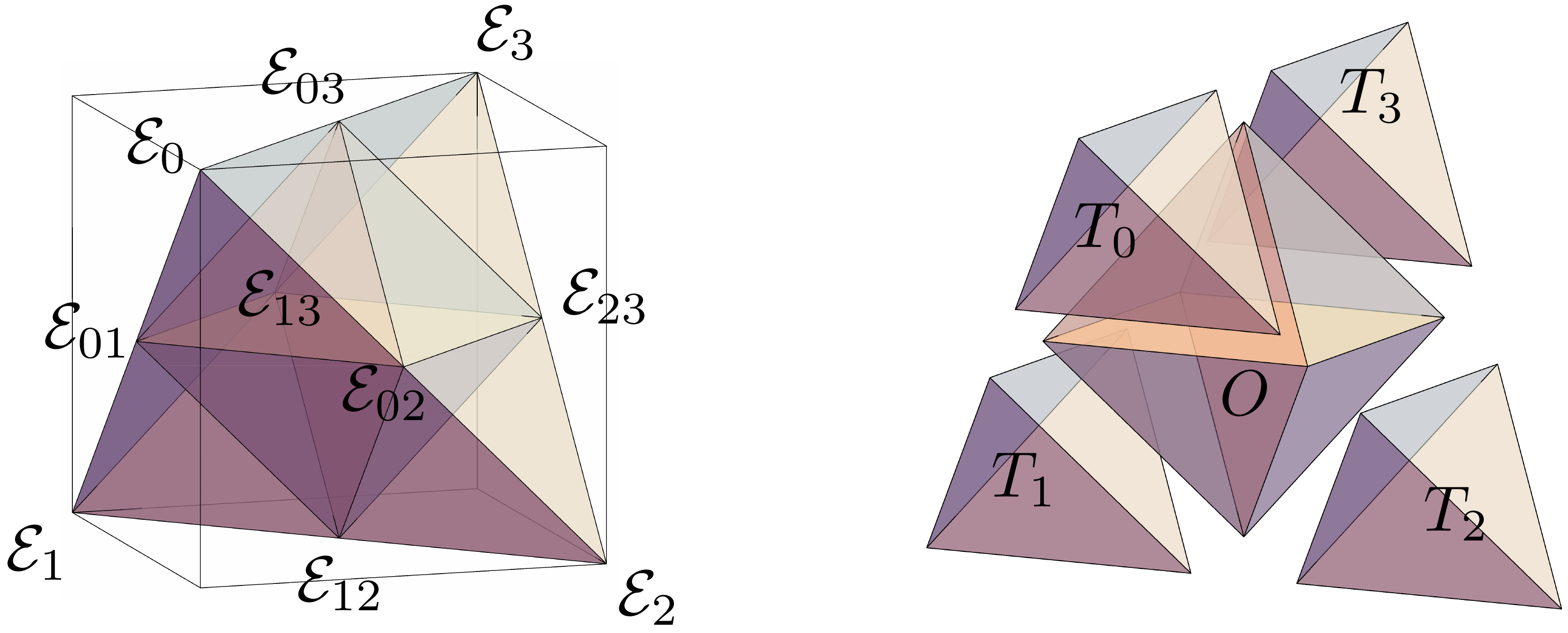}
\caption{Left: 
  The space of unital noise, 
  up to input and output unitary operations, 
  form a tetrahedron $T$ whose vertices are 
  $\EE_0 (1,1,1)$, 
  $\EE_1 (1,-1,-1)$ 
  $\EE_2 (-1,1,-1)$ and 
  $\EE_3 (-1,-1,1)$. 
  Unital noise is a convex combination of the 
  unitary operations $\EE_\m$. 
  Right: 
  The tetrahedron $T$ is decomposed into an octahedron $O$ 
  and four tetrahedra $T_\m$. 
  The six vertices of 
  $O$ 
  are the midpoints $\EE_{\m\n}$ of the edges, 
  which are $(\pm1,0,0)$, $(0,\pm1,0)$, and $(0,0,\pm1)$. 
  This decomposition will be important in 
  Theorem~\ref{theo:kekka2}. 
}
\label{fig-tetra} 
\end{figure*}

\section{Geometry of unital noise} 
\label{sec:tetrahedron}
To discuss protection of the state of a qubit 
against unital noise in the next section, 
we briefly introduce the geometry of unital TPCP maps. 

A linear map $\EE$ 
on 
operators 
is said {\em unital}\/ 
if it preserves the identity operator, $\EE(1)=1$. 
The class of unital noise appears commonly 
in quantum information. 
We can interpret unital TPCP maps as ``unbiased,'' 
because it keeps the completely mixed state. 
An important characteristic of a unital TPCP map $\EE$ 
is that it never decreases the von Neumann entropy $S(\r)$ 
of a quantum state $\r$, i.e., $S(\EE(\r))\ge S(\r)$. 
Thus one can say that unital noise $\EE$ 
always increases (or at least keeps) 
the 
randomness 
of the input state $\r$. 
We remark that one way to understand the inequality above is 
to apply the well-known nonincreasing property 
of quantum relative entropy 
$S(\r||\s):=\Tr \brak{ \r \ln\r-\r \ln\s }$ 
under a TPCP map $\EE$, 
i.e. $S(\EE(\r)||\EE(\s))\le S(\r||\s)$, to the state  
$\s=1/d$. 
In this section, 
we briefly summarize the facts about the convex structure of the
space of unital TPCP maps on a qubit. 

We consider the set of unital TPCP maps on $\LL(\HH)$. 
From the definition of unital TPCP maps, it is easy to see that the
set is convex in $\LL(\LL(\HH))$, is closed under composition, and
contains all unitary operations. 
If $\dim \HH=2$, 
because each Pauli operator $\sigma_\mu$ is unitary, 
the map
\begin{align}
  \NN
  =
  \sum_{\m=0}^3\a^{\m}\Ad_{\s_\m}, 
  \q
 \a^\m\ge0, \q \sum_\m\a^\m=1,
 \label{eq-nnalpha}
\end{align}
is unital and TPCP, 
where $\Ad_U(\r):=U\r\da U$. 
It follows that the map
\begin{align}
 \NN'=\Ad_V\circ\,\NN\circ \Ad_U 
  \label{eq-nnuv}
\end{align}
is also unital and TPCP 
if $U$ and $V$ are unitary operators.
Conversely, it is known~\cite{KingRus} 
that the above $\NN'$ runs over all unital TPCP maps
when we vary $\alpha^\m$, $U$ and $V$.
Thus, apart from the degree of freedom of fixed unitary operations on the
input and output states, 
the unital TPCP maps is parameterized by $\alpha^\m$ 
as in \eqref{eq-nnalpha}. 
They form a tetrahedron $T$ with vertices at 
$\a^\m=(1,0,0,0)$,
$(0,1,0,0)$,
$(0,0,1,0)$ and
$(0,0,0,1)$.
These vertices correspond to unitary operations $\Ad_{\sigma_\mu}$.

For later calculation,
we introduce a new coordinate system $(d^i)$ 
by
\begin{align}
{\scriptsize \Vtr{d^1}{d^2}{d^3}}
=
  \a^0{\scriptsize \Vtr111 }
  +\a^1{\scriptsize \Vtr1{-1}{-1} }
  +\a^2{\scriptsize \Vtr{-1}1{-1} }
  +\a^3{\scriptsize \Vtr{-1}{-1}1 },
\end{align}
 so that 
\begin{align}
  \NN(\s_i)
  =
  d^{i}\s_i 
  \q\text{(no sum)} 
  \label{eq-unital-canonical}
\end{align}
holds.
In the coordinate system $(d^i)$,
the tetrahedron $T$ has the vertices at 
$\EE_0(1,1,1)$, 
$\EE_1(1,-1,-1)$, 
$\EE_2(-1,1,-1)$, 
and 
$\EE_3(-1,-1,1)$ (Fig.~\ref{fig-tetra}). 
Let $\EE_{\m\n}$ $(\m\neq\n)$ be the midpoint of $\EE_\m$ and $\EE_\n$, and 
let $O$ be the octahedron whose vertices are 
the six midpoints $\EE_{\m\n}$. 
Then the space $\overline{T\setminus O}$ consists of four smaller 
tetrahedra. 
Let $T_\m$ ($\m=0,1,2,3$) be each of such tetrahedra that contains $\EE_\m$.
Thus $T=O\cup T_0\cup T_1\cup T_2\cup T_3$.
In the following, we identify each unital noise represented by
\eqref{eq-unital-canonical} [or \eqref{eq-nnalpha}] and a point in the
tetrahedron $T$.

We remark on the tetrahedral symmetry of $T$, which is 
the remaining symmetry on $T$ caused by the freedom of $U$ and $V$ in 
\Ref{eq-nnuv}. 
A pair $(U,V)$ of unitary operators determines by \eqref{eq-nnuv} 
an automorphism $\NN\mapsto \NN'$ of the convex space of unital TPCP maps.
If the pair $(U,V)$ is properly chosen, 
the automorphism sends the tetrahedron $T$ to itself so that 
it is a tetrahedral symmetry map. 
For example, 
when $(U,V)=(1,\sigma_3)$, 
the automorphism is 
$
(d^1,d^2,d^3)\mapsto 
(-d^1,-d^2,d^3)
$ 
and 
sends $(\EE_0,\EE_1,\EE_2,\EE_3)$ 
to $(\EE_3,\EE_2,\EE_1,\EE_0)$.
When 
$(U,V)=(e^{i\pi \s_3/4},e^{-i\pi \s_3/4})$, 
the automorphism is 
$
(d^1,d^2,d^3)\mapsto 
(d^2,d^1,d^3)
$ and 
sends $(\EE_0,\EE_1,\EE_2,\EE_3)$ to $(\EE_0,\EE_2,\EE_1,\EE_3)$. 
Thus, the six pairs $(U,V)=(1,\sigma_i)$ and
$(e^{i\pi \s_i/4},e^{-i\pi \s_i/4})$, 
$1\le i\le3$, generate the tetrahedral symmetry group 
consisting of $4!$ maps (all permutations of the indices). 
In particular, four small tetrahedrons $T_\mu$ are equivalent 
if we disregard unitary operations before and after the noise.

\section{Qubit under unital noise} 
\label{sec:main}

Now we present our main result for 
state protection against 
arbitrary unital noise 
when
$\dim\HH=2$. 
The theorem below generalize the result 
for the depolarizing noise~\cite{paper1}  
to general unital noise. 

In general, there is a trade-off between 
the information gained and 
the disturbance caused by the ex-ante control.
Though one might expect that a protocol 
with weak \ante{} measurements and weak 
\postc{} would be optimal, 
the theorem states that this is not the case. 

Below we discuss control protocols $\brac{\paren{\II_\om,\CC_\om}}$ for noise $\NN$ of the  
form~\eqref{eq-nnalpha} without loss of generality~\cite{unitary}. 

\begin{theorem}
  Let $\dim\HH=2$ and 
  let $\NN\in T$. Thus $\NN$ is 
  unital noise of the form \eqref{eq-unital-canonical}. 
 Then the optimal \ante{}-\postc{} protocol
 $\brac{(\II_\om,\CC_\om)}_{\om=1,2,...}$  
 suppressing noise $\NN$ is given as follows.

   (i) When $\NN\in T_\m$, 
   the optimal protocol 
   $\brac{(\II_\om,\CC_\om)}_{\om=1}$ 
   is a {\gdn} protocol defined by 
   \begin{align}
     \II_1= \id,\quad \CC_1=\Ad_{\s_\m}. 
   \end{align}
   The optimal average fidelity is 
   \begin{align}
     \bar F_{\text{\rm \gdn{}}}
     =
      \f12+\f
      { \abs{ d^1 } + \abs{ d^2 } + \abs{ d^3 } } 6.
   \end{align}

   (ii) 
   When $\NN\in O$, 
   the {\dr} protocol 
   $\brac{(\II_\om,\CC_\om)}_{\om=1,2}$, 
   defined by 
   \Ref{eq-DR-I-gendim} and \Ref{eq-DR-C-gendim} with $d=2$, 
   is optimal. 
   The optimal average fidelity is  
   $\bar F_{\mathrm{DR}}=2/3$. 
 \label{theo:kekka2}
\end{theorem}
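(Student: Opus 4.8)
The plan is to reduce the optimization entirely to Propositions~\ref{lem-convexity} and \ref{prop-dr}. By the fidelity formula \eqref{eq:avefid2} with $d=2$ and the definition \eqref{eq-averaged-op}, finding the optimal protocol amounts to maximizing $\trhs\EE$ over all average operations $\EE=\sum_\om\CC_\om\circ\NN\circ\II_\om$, and Proposition~\ref{lem-convexity} tells us that the set of noise optimally suppressed by any one fixed protocol is convex. Hence it suffices to exhibit, for each region $T_\m$ and for $O$, a single protocol that is already optimal at every vertex of that region; convexity then propagates the optimality to the whole region, and one concludes by evaluating $\bar F$ for that protocol at a general point. The first step is to read off the vertices from Sec.~\ref{sec:tetrahedron}: $O=\mathrm{conv}\{\EE_{\m\n}\}$ has the six edge midpoints as its vertices, while $T_\m=\mathrm{conv}\{\EE_\m,\ \EE_{\m\n}\ (\n\neq\m)\}$ has $\EE_\m$ together with the three midpoints on the edges through $\EE_\m$. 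The crucial elementary remark is that in the canonical form \eqref{eq-unital-canonical} each midpoint $\EE_{\m\n}$ has exactly one nonvanishing coefficient $d^i$, equal to $+1$ or $-1$; such a map is complete dephasing in one Pauli eigenbasis (possibly followed by swapping the two eigenstates), hence a {\qcq} channel of the form \eqref{eq-ebtp}.

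Part (ii) is then immediate. Every vertex of $O$ is {\qcq}, so by Proposition~\ref{prop-dr} the {\dr} protocol \eqref{eq-DR-I-gendim}--\eqref{eq-DR-C-gendim} with $d=2$ (and any fixed orthonormal basis) optimally suppresses each of the six vertices with $\bar F=2/3$; by Proposition~\ref{lem-convexity} it then optimally suppresses all of $O$, and the value $\bar F_{\mathrm{DR}}=2/3$ is noise independent, as recorded in the remark following Proposition~\ref{prop-dr}.

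For part (i) I would analyze the candidate {\gdn} protocol $P_\m=(\II_1,\CC_1)=(\id,\Ad_{\s_\m})$, whose average operation for $\NN\in T$ is $\EE=\Ad_{\s_\m}\circ\NN$. Writing $\Ad_{\s_\m}(\s_i)=\eta^\m_i\s_i$ with signs $\eta^\m_i\in\{+1,-1\}$, $\eta^\m_0=1$, and recalling $\NN(\s_i)=d^i\s_i$ with $d^0:=1$ (as $\NN$ is unital), \eqref{eq-trhs-2d} gives $\trhs\EE=\sum_{i=0}^{3}\eta^\m_i d^i=1+\sum_{i=1}^{3}\eta^\m_i d^i$. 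At the vertex $\EE_\m$, where $\NN=\Ad_{\s_\m}$, this yields $\EE=\id$ and $\bar F=1$, trivially optimal; at each adjacent midpoint $\EE_{\m\n}$ a short computation with the $\eta^\m_i$ (using orthogonality of the four vertex vectors $\eta^\m$) gives $\trhs\EE=2$, hence $\bar F=2/3$, matching the {\qcq} bound and so optimal there too. Proposition~\ref{lem-convexity} then makes $P_\m$ optimal on all of $T_\m$. Finally, the vertices of $T_\m$ all satisfy $\operatorname{sign}(d^i)=\eta^\m_i$ whenever $d^i\neq0$, so the same holds throughout $T_\m$; thus $\eta^\m_i d^i=|d^i|$ there and $\bar F=(2+\trhs\EE)/6=(3+|d^1|+|d^2|+|d^3|)/6$, the asserted optimal fidelity in (i).

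Almost everything here is mechanical. The points that need a little care are the tetrahedral bookkeeping --- verifying that each $\EE_{\m\n}$ really has a single nonzero $d^i$, and that the sign pattern of $(d^i)$ stays constant on each $T_\m$ --- and the observation that one and the same {\dr} (respectively {\gdn}) protocol attains the optimum at all vertices of $O$ (respectively $T_\m$) simultaneously, which is precisely what lets Proposition~\ref{lem-convexity} carry the argument. The genuinely nontrivial ingredient, that {\qcq} noise cannot be suppressed below $\bar F=2/(d+1)$ and that {\dr} saturates this, has already been imported through Proposition~\ref{prop-dr}.
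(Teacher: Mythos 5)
Your proof is correct and follows essentially the same route as the paper's: establish optimality of a single fixed protocol at the vertices of $O$ and of each $T_\m$ via Propositions~\ref{lem-convexity} and \ref{prop-dr}, and propagate it to the whole region by convexity. The only difference is bookkeeping: where you verify directly that all six midpoints $\EE_{\m\n}$ are {\qcq} and compute $\trhs(\Ad_{\s_\m}\circ\NN)$ at each vertex of $T_\m$, the paper treats only the three vertices $\EE_{0i}$ explicitly (Lemma~\ref{prop-depha}) and transports the result to the remaining vertices and to $T_i$, $i\neq0$, by the tetrahedral symmetry $\NN\mapsto\Ad_{\s_i}\circ\NN$.
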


The \gdn{} protocol above
does not involve any measurement 
and merely cancels the reversible part of the noise.
When $\m=0$, it is nothing but 
the \dn{} protocol and 
the value of the average fidelity 
$\bar F_{\textrm{DN}}$ 
can be obtained by direct calculation: 
\begin{align}
 \bar F_{\textrm{DN}} 
 &= \int_{\|\ve x\|=1}d\nu 
 \Tr\brak{\f{1+\ve x\cdot\ve\s}2 
   \f{1+\sum_id^ix^i\s_i}2}
 \nn
 &=\f12+\f{\sum_id^i}{6}, 
\label{eq-attain-dn}
\end{align}
where 
$\nu$ is the normalized uniform measure on a unit sphere. 
The {\dr} protocol appeared 
in Proposition~\ref{prop-dr} 
and $\bar F\sbt{DR}=2/3$ follows from Eq.~\Ref{eq-Fbar-DR-gendim} and 
$\dim\HH=2$. 
The {\gdn} and {\dr } protocols are considered ``classical'' because 
one either
performs no
quantum measurement at all
or only uses the classical information
extracted by the \ante{} measurement. 
 
The difference between 
noise in $T_\mu$ and $O$
can be understood as the strength of noise.
In fact, the vertices $\EE_\mu$ 
are unitary operations, while
the origin, which always outputs the completely mixed state,
entirely destroys the initial state.
The theorem above
states that 
the optimal protocol 
depends on the strength of the noise
and suddenly changes at the threshold,
with no intermediate regime
in which truly quantum protocols are optimal.

\begin{lemma}
  In the two-dimensional Hilbert space $\HH$, 
  consider
  the noise $\NN=\EE_{0i}$, $1\le i\le 3$, i.e., 
  \begin{align}
    \NN(\r)=\f12\paren{ \r + \s_i\r\s_i  }. 
    \label{eq-strong-depha-noise}
  \end{align}
  The {\dn} 
  and the {\dr} protocols
  are optimal 
  \ante{}-\postc{} protocols to supress $\NN=\EE_{0i}$. 
  The optimal average fidelity is  
  $\bar F=2/3$. 
  \label{prop-depha}
\end{lemma}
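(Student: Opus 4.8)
The plan is to reduce the optimality of the \dn{} and \dr{} protocols for the dephasing channel $\NN=\EE_{0i}$ to the two general tools already established, namely Proposition~\ref{prop-dr} (any \qcq{} noise is optimally suppressed by \dr{}, with $\bar F=2/(d+1)$) and Proposition~\ref{lem-convexity} (convexity of the set of noise optimally suppressed by a fixed protocol). Concretely, I would first show that $\EE_{0i}$ is itself a \qcq{} channel. Without loss of generality take $i=3$, so $\NN(\r)=\tfrac12(\r+\s_3\r\s_3)$, which is the map that projects onto the diagonal in the $\s_3$ eigenbasis: $\NN(\r)=\sum_{k=0,1}\ket{k}\bra{k}\,\bra{k}\r\ket{k}$. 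This is manifestly of the form \eqref{eq-ebtp} with $\r_k=\ket{k}\bra{k}$ and $M_k=\ket{k}\bra{k}$, hence \qcq. By Proposition~\ref{prop-dr}, the \dr{} protocol of \eqref{eq-DR-I-gendim}--\eqref{eq-DR-C-gendim} with $d=2$ is optimal for $\NN=\EE_{03}$ and the optimal average fidelity is $\bar F=2/3$; the general $i$ follows by the unitary equivalence \eqref{eq-nnuv}, or equivalently by the tetrahedral symmetry that permutes the $\EE_{\m\n}$.

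**Optimality of \dn.** It remains to see that the \dn{} protocol also attains $\bar F=2/3$ for this same noise, which will simultaneously show it is optimal. For $\NN=\EE_{03}$ the canonical coordinates are $(d^1,d^2,d^3)=(0,0,1)$, and the \dn{} formula \eqref{eq-attain-dn} gives $\bar F_{\mathrm{DN}}=\tfrac12+\tfrac{1}{6}\sum_i d^i=\tfrac12+\tfrac16=\tfrac23$. Thus $\bar F_{\mathrm{DN}}=\bar F_{\mathrm{DR}}=2/3$, so the no-information \dn{} protocol reaches the \qcq{} bound and is therefore optimal as well. This is the coincidence that makes $\EE_{0i}$ the borderline case sitting at a vertex of the octahedron $O$, and it is exactly the ingredient needed later: at this point two qualitatively different protocols tie, which is precisely the hypothesis under which Proposition~\ref{lem-convexity} can be leveraged in the proof of Theorem~\ref{theo:kekka2}.

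**Remarks on difficulty.** There is essentially no obstacle here beyond correctly identifying $\EE_{0i}$ as \qcq{} and invoking the two propositions; the only place to be careful is the appeal to Ref.~\cite{BruMac99} inside Proposition~\ref{prop-dr}, which supplies the upper bound $\bar F\le 2/(d+1)$ valid for \emph{any} protocol against a \qcq{} channel (one should note that the average operation $\EE$ of any protocol remains \qcq{} when $\NN$ is, as shown in that proof). An alternative self-contained route, should one not wish to route through Proposition~\ref{prop-dr}, would be to use the characterization of \qcq{} maps via separability of $(\id\ot\NN)(\ketbra\Psi)$: here $(\id\ot\EE_{03})(\ketbra\Psi)=\tfrac12(\ket{00}\bra{00}+\ket{11}\bra{11})$, which is separable, again certifying \qcq. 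Either way, the statement is a direct corollary of the general machinery, and I would present it as such, spelling out only the identification of the channel and the two-line fidelity computation \eqref{eq-attain-dn} specialized to $(d^1,d^2,d^3)=(\delta_{i1},\delta_{i2},\delta_{i3})$.
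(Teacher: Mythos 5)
Your proposal is correct and follows essentially the same route as the paper's proof: identify $\EE_{0i}$ as a {\qcq} channel via the spectral projectors of $\s_i$, invoke Proposition~\ref{prop-dr} for the {\dr} optimality and the value $\bar F=2/3$, and then verify that the {\dn} protocol attains the same value by substituting $(d^i)=(\delta_{i1},\delta_{i2},\delta_{i3})$ into Eq.~\eqref{eq-attain-dn}. The additional remarks (the separability check of the Choi state and the role of $\EE_{0i}$ as the tie point exploited later via Proposition~\ref{lem-convexity}) are consistent with the paper's discussion and require no correction.
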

\begin{proof}
  We give a proof for $i=3$; 
  the cases $i=1,2$ are essentially the same. 
Let us show that 
the dephasing noise $\NN=\EE_{03}$ 
is a {\qcq} channel. 
Let 
$P_0$ and $P_1$ be the projection to the eigenspaces of $\s_3$ with
eigenvalues $1$ and $-1$. 
Then, inserting $\s_0=P_0+P_1$ and
$\s_3=P_0-P_1$ 
to 
$\NN(\r)=\f12\paren{ \s_0\r\s_0 + \s_3\r\s_3 }$, 
one can rewrite 
$\NN$ in the form \Ref{eq-ebtp} with 
$\r_k:=P_k$ and $M_k:=P_k$, $k=0,1$. 
Then the optimality of the {\dr} protocol in suppressing $\NN$ 
follows 
from Proposition~\ref{prop-dr}. 
On the other hand, 
the value $\bar F=2/3$ can 
be attained also by the {\dn} protocol, 
which can be seen by substituting 
$(d^i)=(0,0,1)$ 
into $\bar F\sbt{DN}$ in \Ref{eq-attain-dn}. 
Therefore the claim is true. 
\end{proof}
We give in the Appendix an alternative 
proof of Lemma~\ref{prop-depha} 
which does not depend on Proposition~\ref{prop-dr} 
and is based on a direct calculation. 
Now, let us prove Theorem~\ref{theo:kekka2}. 
\begin{proof}[Proof of Theorem~\ref{theo:kekka2}]
  (i) When $\NN\in T_0$, 
  it is trivial that 
  the {\dn} protocol
  $\brac{(\II_\om,\CC_\om)}_{\om=1}=\brac{(\id,\id)}$
  optimally suppresses the noise
  $\EE_0=\id$ 
  with $\bar F=1$. 
  From Lemma~\ref{prop-depha}, 
  this protocol 
  also suppresses optimally 
  the noise $\EE_{0i}$, $1\le i\le3$. 
  Then from Lemma~\ref{lem-convexity}, 
  the {\dn} protocol 
  optimally suppresses 
  any noise $\NN$ in the convex hull $T_0$ 
  of 
  $\EE_0$, $\EE_{01}$, $\EE_{02}$, and $\EE_{03}$. 
  The average fidelity $\bar F$ is given by Eq.~\eqref{eq-attain-dn}. 
  
  When $\NN\in T_i$,  
  the noise $\Ad_{\s_i}\circ\NN$ is in $T_0$, 
  as explained in the preceding section,
  and hence optimally suppressed by the {\dn} protocol $\brac{(\id,\id)}$.
  Therefore 
  $\NN$ here is optimally suppressed by 
  $\brac{(\id, \Ad_{\s_i})}$~\cite{unitary},
  which is a {\gdn} protocol. 
  The average fidelity $\bar F$ is given by Eq.~\eqref{eq-attain-dn}
  with $d^j$ $(j\ne i)$ replaced with $-d^j$.

  (ii) By Lemma~\ref{prop-depha}, three of the vertices 
  $\EE_{0i}$ of the octahedron $\OO$ are 
  optimally 
  suppressed by the {\dr} protocol 
  $\brac{(\II_\om,\CC_\om)}$ 
  defined by \Ref{eq-DR-I-gendim} and \Ref{eq-DR-C-gendim}. 
  The other vertices can be flipped to one of the former three by
  $\Ad_{\s_i}$, as explained 
in the preceding section. 
  Thus they are 
  optimally 
  suppressed by the {\dr} protocol 
  $\brac{(\II_\om,\CC_\om\circ\Ad_{\s_i})}$~\cite{unitary}. 
  Furthermore, 
  the protocol $\brac{(\II_\om,\CC_\om)}$ 
  gives the same average fidelity as  
  $\brac{(\II_\om,\CC_\om\circ\Ad_{\s_i})}$, 
  because the average operation~\Ref{eq-averaged-op} yields
  $\EE=\sum_\om \CC_\om\circ\II_\om$ for the both protocols.
  Thus all vertices of $\OO$ are 
  optimally 
  suppressed by the same {\dr} protocol 
  $\brac{(\II_\om,\CC_\om)}$. 
  Recalling that 
  $O$ is the convex hull of these six vertices, 
  one concludes, by Proposition~\ref{lem-convexity}, 
  that 
  each $\NN\in\OO$ is 
  optimally 
  suppressed by the {\dr} protocol 
  $\brac{(\II_\om,\CC_\om)}$. 
\end{proof}

\section{Conclusion and discussions} \label{sec:sum}

We discussed the problem of 
protecting a completely unknown
state against given unital noise by \ante{} and \postc{}
scheme.  
A protocol in the scheme is 
described mathematically by 
a family of pairs, 
$\brac{(\II_\om,\CC_\om)}_{\om\in\Omega}$, 
where $\brac{ \II_\om }_{\om\in\Om}$ is 
the CP instrument with the set $\Omega$ of outcomes 
which describes 
the {\ante} measurement and 
the map $\CC_\om$ 
is the TPCP map 
which describes the {\post} operation when the outcome $\om$ is obtained. 
To evaluate the closeness of the input and output states, 
we have chosen the average fidelity $\bar F$ between the input and output states. 

We presented two general observations on 
convexity of the noise that are optimally suppressed by the same
protocol (Proposition~\ref{lem-convexity}) 
and 
a sufficient condition for the {\dr} protocol to be optimal 
(Proposition~\ref{prop-dr}). 
These observations 
enabled us to prove the previous conjecture as 
Theorem~\ref{conj:kekka3},
which states 
that the depolarizing noise is optimally suppressed 
by 
classical protocols; namely, 
the {\dn} protocol is optimal 
if the noise is weak 
and 
the {\dr} protocol is optimal 
if the noise is strong. 
Then we focused on the 
case of a qubit system 
and generalized the result to the class of unital 
noise, 
which can be considered as unbiased because it preserves 
the completely mixed state. 
We proved that arbitrary unital noise 
is optimally suppressed by 
the classical protocols, the no measurement protocol or 
the discriminate and reprepare protocol depending on the strength of the noise (Theorem~\ref{theo:kekka2}).

Our results suggest that 
one can perform nontrivial suppression
of noise only by taking advantage of the bias of the noise. 
This gives a natural understanding for the previously known facts and 
numerical evidences that 
nontrivial suppression is possible against the amplitude damping noise but
is not possible against the depolarizing noise~\cite{paper1,wang14}. 
For a deeper and more precise understanding
of state protection in this direction,
it will be necessary to examine our 
Theorem~\ref{theo:kekka2} in higher dimensional Hilbert spaces 
and  
to investigate optimal {\ante-\postc} protocols against 
non-unital noise. 

We would like to emphasize that 
the new method 
based on Propositions~\ref{lem-convexity} and
\ref{prop-dr} 
is much more general than the previous 
one~\cite{paper1} 
which involved 
a detailed 
estimation of a function of several variables.
Thus it may give not only a 
systematic approach 
but also a general 
perspective to the problem of state protection. 
The proofs of Theorems~\ref{conj:kekka3} and \ref{theo:kekka2} 
were to find several 
TPCP maps which are optimally suppressed by a single
protocol and 
to derive the optimality of the protocol in their convex hull.  
It was especially important 
to find the particular 
noise (TPCP map) that is optimally suppressed by 
two or more different protocols simultaneously,
such as $\EE_{\m\n}$ in the proof of Theorem~\ref{theo:kekka2}. 
In other words, it is essential 
to find the {\em watersheds}\/ ({\em critical points}) 
in the space of noise 
for determination of the 
{\em basin of optimality}\/ ({\em convex domain}) 
of a protocol. 
Further applications of the method may reveal the 
``phase diagram'' of optimality 
in the space of noise. 
We hope that this work serves as a prototype for such developments. 

\section*{Acknowledgments}
T.~K. acknowledges the support from MEXT-Supported Program for the
Strategic Research Foundation at Private Universities ``Topological
Science'' and from Keio University Creativity Initiative ``Quantum
Community.''

\appendix*
\section{An elementary proof of Lemma~\ref{prop-depha}}
  We give a proof for the case $i=3$, the dephasing noise.
  The cases $i=1,2$ are similar. 
  From Eq.~\eqref{eq:avefid2}, 
  the optimal protocol 
  $\{(\II_\om,\CC_\om)\}_{\om=1,2,...}$ 
  is the maximizer of 
  $\sum_{\om}f_\om$, with 
  \begin{align}
    f_\om
    :=& \, \trhs\CC_\om\circ\NN\circ\II_\om
    = \trhs\NN\circ\II_\om \circ \CC_\om
    \nn
    =& \, 
    \f12\tr \II_\om\circ \CC_\om(1)
    +
    \f12\tr \s_z\II_\om \circ \CC_\om(\s_z)
    \nn
    =& \, 
    \f12\tr \II_\om^*(1) \CC_\om(1)
    +
    \f12\tr \II_\om^*(\s_z) \CC_\om(\s_z), 
    \label{eq-fom}
  \end{align}
  where we have used 
  \eqref{eq-trhs-2d} and
  $\NN^*(1)=1$, $\NN^*(\s_z)=\s_z$, 
  $\NN^*(\s_x)=\NN^*(\s_y)=0$, which follow from 
  \eqref{eq-strong-depha-noise}. 

  Let us write
  \begin{align}
    &\CC_\om(1)=1+\ve\a_\om\cdot\ve \s, 
    \q
    \CC_\om(\s_z)=\ve\b_\om\cdot \ve\s, 
    \\
    &
    \II^*_\om(1)=\g_{\om}+\ve\d_{\om}\cdot\ve\s, 
    \q
    \II^*_\om(\s_{z})=\ep_{\om}+\ve\z_{\om}\cdot\ve\s. 
  \end{align}
   Because $\nu +\ve{\xi}\cdot\ve{\sigma}\ge0$ holds 
   if and only if $\|\xi\|\le \nu$,
   positivity of $\CC_\om$ and 
   $\II_\om$ imply 
  $\norm{ \ve\a_\om\pm \ve\b_\om }\le1$ 
  and 
  $
  \norm{ \ve\d_\om\pm \ve\z_\om }
  \le
  \g_\om\pm\ep_\om
  $, respectively 
  (Consider the images of $1\pm\s_z$). 
  One therefore has
  \begin{align}
    &
    f_\om 
    =
    \g_\om+
     \ve\a_\om \cdot \ve\d_\om + \ve\b_\om \cdot \ve\z_\om
    \nn
    &
    =
    \g_\om+
    \f12
    \paren{
      \paren{\ve\a_\om+\ve\b_\om}
      \cdot
      \paren{\ve\d_\om+\ve\z_\om}
      +
      \paren{\ve\a_\om-\ve\b_\om}
      \cdot
      \paren{\ve\d_\om-\ve\z_\om}
    }
    \nn
    &
    \le
    \g_\om+
    \f12
    \paren{
      \norm{\ve\a_\om+\ve\b_\om}
      \norm{\ve\d_\om+\ve\z_\om}
      +
      \norm{\ve\a_\om-\ve\b_\om}
      \norm{\ve\d_\om-\ve\z_\om}
    }
    \nn
    &
    \le
    \g_\om+
    \f12
    \paren{
      (\g_\om+\ep_\om)
      +
      (\g_\om-\ep_\om)
    }
    \nn
    &=
    2\g_\om, 
\end{align}
where we have used the Cauchy-Schwarz inequality in the first
inequality. 
It follows from trace preservation of $\sum_\om\II_\om$, 
or $\sum_\om\II_\om^*(1)=1$, that 
$\sum_\om\g_\om=1$ holds. 
As a result, one obtains
  $
    \sum_\om f_\om\le2, 
  $
  which 
  is, 
  from Eq.~\eqref{eq:avefid2},
  equivalent to 
  \begin{align}
    \bar F\le \f23. 
  \end{align}
  This value  
  is attained by the both of the {\dn} and the {\dr} protocols, 
  which can be seen by 
  \eqref{eq-attain-dn} with $(d^1,d^2,d^3)=(1,0,0)$ and 
  by \eqref{eq-Fbar-DR-gendim}, respectively.

\end{document}